\newcommand{\LCE}{\ensuremath{\mathrm{LCE}}\xspace}
\newcommand{\occ}{\ensuremath{\mathrm{occ}}}
\newtheorem{theorem}{Theorem}
\newtheorem{lemma}{Lemma}
\newtheorem{example}{Example}
\newtheorem{definition}{Definition}
\newcommand{\qed}{\hfill\ensuremath{\Box}\medskip\\\noindent}
\newenvironment{proof}{\noindent\emph{Proof. }}{\qed}
\newcommand{\infrac}[2]{{#1}/{#2}}
\newcommand{\floor}[1]{\left\lfloor{#1}\right\rfloor}
\newcommand{\res}{\textit{res}}
\title{Time-Space Trade-Offs for\\ Longest Common Extensions\thanks{This work was partly supported by EPSRC. An extended abstract of this paper appeared in the proceedings of the 23rd Annual Symposium on Combinatorial Pattern Matching.}}
\author{Philip Bille\footnotemark[2] \and Inge Li G\o rtz\footnotemark[2] \and Benjamin Sach\footnotemark[3] \and Hjalte Wedel Vildh{\o}j\footnotemark[2]\\[1em]
\footnotemark[2]~~Technical University of Denmark, DTU Compute, \texttt{\{phbi,inge\}@dtu.dk,hwv@hwv.dk}\\
\footnotemark[3]~~University of Warwick, Department of Computer Science, \texttt{sach@dcs.warwick.ac.uk}
}
\date{}
\begin{document}
\maketitle

\begin{abstract}
We revisit the longest common extension (LCE) problem, that is, preprocess a string $T$ into a compact data structure that supports fast LCE queries. An LCE query takes a pair $(i,j)$ of indices in $T$ and returns the length of the longest common prefix of the suffixes of $T$ starting at positions $i$ and $j$. We study the time-space trade-offs for the problem, that is, the space used for the data structure vs. the worst-case time for answering an LCE query. Let $n$ be the length of $T$. Given a parameter $\tau$, $1 \leq \tau \leq n$, we show how to achieve either $O(\infrac{n}{\sqrt{\tau}})$ space and $O(\tau)$ query time, or $O(\infrac{n}{\tau})$ space and $O(\tau \log({|\LCE(i,j)|}/{\tau}))$ query time, where $|\LCE(i,j)|$ denotes the length of the LCE returned by the query. These bounds provide the first smooth trade-offs for the LCE problem and almost match the previously known bounds at the extremes when $\tau=1$ or $\tau=n$. We apply the result to obtain improved bounds for several applications where the LCE problem is the computational bottleneck, including approximate string matching and computing palindromes. We also present an efficient technique to reduce LCE queries on two strings to one string. Finally, we give a lower bound on the time-space product for LCE data structures in the non-uniform cell probe model showing that our second trade-off is nearly optimal.
\end{abstract}

\section{Introduction}
Given a string $T$, the \emph{longest common extension} of suffix $i$ and $j$, denoted $\LCE(i,j)$, is the length of the longest common prefix of the suffixes of $T$ starting at position $i$ and $j$. The \emph{longest common extension problem} (LCE problem) is to preprocess $T$ into a compact data structure supporting fast longest common extension queries.

The LCE problem is a basic primitive that appears as a subproblem in a wide range of string matching problems such as approximate string matching and its variations~\cite{LV1989, Myers1986,LMS1998, CH2002, ALP2004}, computing exact or approximate tandem repeats \cite{Main1984, Landau2001, GS2004}, and computing palindromes. In many of the applications, the LCE problem is the computational bottleneck.

In this paper we study the time-space trade-offs for the LCE problem, that is, the space used by the preprocessed data structure vs. the worst-case time used by LCE queries. We assume that the input string is given in read-only memory and is not counted in the space complexity. There are essentially only two time-space trade-offs known: At one extreme we can store a suffix tree combined with an efficient nearest common ancestor (NCA) data structure~\cite{HT1984} (other combinations of $O(n)$ space data structures for the string can also be used to achieve this bound, e.g.~\cite{FH2006}). This solution uses $O(n)$ space and supports LCE queries in $O(1)$ time. At the other extreme we do not store any data structure and instead answer queries simply by comparing characters from left-to-right in $T$. This solution uses $O(1)$ space and answers an $\LCE(i,j)$ query in $O(|\LCE(i,j)|) = O(n)$ time. 
This approach was recently shown to be very practical~\cite{INT2010}.

\subsection{Our Results}
We show the following main result for the longest common extension problem.

\begin{theorem}\label{thm:main}
For a string $T$ of length $n$ and any parameter $\tau$, $1 \leq \tau \leq n$, $T$ can be preprocessed into a data structure supporting $\LCE(i,j)$ queries on $T$. This can be done such that the data structure
\begin{itemize}
\item[(i)] uses $O\big(\frac{n}{\sqrt{\tau}}\big)$ space and supports queries in $O(\tau)$ time. The preprocessing of $T$ can be done in $O(\frac{n^2}{\sqrt{\tau}})$ time and $O\big(\frac{n}{\sqrt{\tau}}\big)$ space.
\item[(ii)] uses $O\left(\frac{n}{\tau}\right)$ space and supports queries in $O\left(\tau \log \left(\frac{|\LCE(i,j)|}{\tau} \right)\right)$ time. The preprocessing of $T$ can be done in $O(n)$ time and $O(\frac{n}{\tau})$ space. The solution is randomised (Monte-Carlo); with high probability, all queries are answered correctly.
\item[(iii)] uses $O\big(\frac{n}{\tau}\big)$ space and supports queries in $O\left(\tau \log \left(\frac{|\LCE(i,j)|}{\tau} \right)\right)$ time. The preprocessing of $T$ can be done in $O(n\log n)$ time and $O(n)$ space. The solution is randomised (Las-Vegas); the preprocessing time bound is achieved with high probability.
\end{itemize}
\end{theorem}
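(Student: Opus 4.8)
The plan is to treat all three parts through one template: store a data structure only on a sparse, roughly $\tau$-spaced sample of positions of $T$, and answer a query by doing $O(\tau)$ ``local'' work around $i$ and $j$ to reduce it to the sparse structure. For part~(i) the sample comes from a \emph{difference cover} modulo $\tau$: a set $D \subseteq \{0,\dots,\tau-1\}$ such that for every residue $r$ there are $x,y \in D$ with $x - y \equiv r \pmod{\tau}$. Such a $D$ of size $O(\sqrt\tau)$ exists and is computable, so sampling $S = \{k : (k \bmod \tau) \in D\}$ yields $|S| = O(\infrac{n}{\sqrt\tau})$ positions. I would build a sparse suffix tree on the suffixes starting in $S$, augmented with an NCA/LCA structure, so that $\LCE(i',j')$ for $i',j' \in S$ is answered in $O(1)$ using $O(\infrac{n}{\sqrt\tau})$ space. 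To answer $\LCE(i,j)$, the defining property of $D$ guarantees a shift $0 \le h < \tau$ with both $i+h, j+h \in S$; I compare $T[i\mathinner{.\,.}i+h-1]$ with $T[j\mathinner{.\,.}j+h-1]$ directly in $O(\tau)$ time, returning the first mismatch if one occurs, and otherwise returning $h$ plus the $O(1)$ sparse query on $(i+h,j+h)$ (with the usual care near the end of $T$). Correctness is immediate from the difference-cover property, the query time is $O(\tau)$, and the space is $O(\infrac{n}{\sqrt\tau})$.

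For parts~(ii) and~(iii) I would replace the suffix-tree sample by Karp--Rabin fingerprints. Storing the prefix fingerprints $\phi(T[1\mathinner{.\,.}k\tau])$ for $k = 0,1,\dots,\floor{n/\tau}$ uses $O(\infrac{n}{\tau})$ space, and from them the fingerprint of an \emph{arbitrary} substring can be recovered in $O(\tau)$ time by extending the nearest stored prefix fingerprint by fewer than $\tau$ characters. A query then runs an exponential search on the length in units of $\tau$: doubling the candidate length $2^t\tau$ until the two fingerprints disagree pins the mismatch to within a factor of two, after which I locate the largest matching multiple of $\tau$ and finish with a naive scan of the final $O(\tau)$ characters. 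Each fingerprint comparison costs $O(\tau)$ and there are $O(\log(\infrac{|\LCE(i,j)|}{\tau}))$ of them, giving the stated query time. With a random $\phi$ a fingerprint match implies a true match with high probability, so all queries are correct with high probability, and the fingerprints are computed in a single $O(n)$-time pass in $O(\infrac{n}{\tau})$ space; this is precisely the Monte-Carlo guarantee of part~(ii).

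For part~(iii) I would make the same structure \emph{always} correct by choosing, during preprocessing, a fingerprint function with provably no collisions among the substrings a query can ever inspect. The crucial observation is that if the final search step is organised to compare only blocks whose length is a power of two times $\tau$ (building up the answer bit by bit), then the set of relevant substrings is $\{T[p\mathinner{.\,.}p+2^s\tau-1] : 1 \le p \le n,\ 0 \le s \le \log(\infrac{n}{\tau})\}$, of size only $O(n\log n)$. I would therefore draw a random $\phi$, compute every relevant fingerprint using temporary full prefix fingerprints (affordable within the $O(n)$ preprocessing-space budget), and verify, per length class, that no two \emph{distinct} equal-length substrings collide: candidate collisions are found by hashing fingerprint values, and genuine (in)equality of the underlying substrings is confirmed in $O(1)$ via a suffix array with an LCP/RMQ structure, redrawing $\phi$ on failure. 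A redraw is needed only with small probability, so verification terminates in $O(n\log n)$ time with high probability, and the final structure retains only the $O(\infrac{n}{\tau})$ sampled fingerprints.

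The step I expect to be the main obstacle is this Las-Vegas verification in part~(iii): achieving $O(n\log n)$ time in $O(n)$ space hinges on (a)~restructuring the query so that only the $O(n\log n)$ power-of-two-times-$\tau$-length substrings are ever compared, and (b)~certifying collision-freeness over this set without enumerating pairwise comparisons, which is exactly why the suffix-array equality test is combined with a fingerprint hash table. A secondary technical point is building the sparse suffix structure of part~(i) within the claimed $O(\infrac{n^2}{\sqrt\tau})$ time while holding the working space to $O(\infrac{n}{\sqrt\tau})$, rather than the $O(n)$ a naive construction would consume.
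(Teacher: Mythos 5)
Your parts (i) and (ii) follow the paper's proof essentially verbatim: a difference-cover sample with a sparse compact trie plus NCA structure for (i), and $\tau$-spaced Karp--Rabin fingerprints with an exponential/doubling search for (ii) (the paper stores suffix fingerprints $\phi(T[k\tau\ldots n])$ rather than prefix fingerprints and organises the search as a recursion over descending exponents, but both yield the claimed $O\left(\tau\log\left(\infrac{|\LCE(i,j)|}{\tau}\right)\right)$ bound). Part (iii) is where you genuinely diverge, and your route also works. The paper only demands that $\phi$ be \emph{$\tau$-good}: collision-free on pairs of length-$2^\ell\tau$ substrings where one side is \emph{block-aligned} (starts at a multiple of $\tau$), since the query never compares anything else. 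It verifies this level by level in ascending $\ell$: block-aligned fingerprints go into a dynamic dictionary, a sliding window enumerates every unaligned fingerprint in $O(1)$ amortised time, and a fingerprint match is confirmed or refuted in $O(1)$ time \emph{inductively} --- for $\ell>0$ by comparing stored left-half and right-half fingerprints, whose reliability follows from level $\ell-1$ having already been verified, and for $\ell=0$ by an Aho--Corasick automaton over the $O(n/\tau)$ block-aligned $\tau$-length substrings. You instead verify the stronger property that \emph{all} pairs of equal-length substrings of length $2^s\tau$ are collision-free, detecting candidate collisions by hashing fingerprints per length class and settling string equality in $O(1)$ via an auxiliary suffix array with LCP/RMQ built during preprocessing and then discarded. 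This fits the theorem's $O(n\log n)$-time, $O(n)$-space budget, the stronger property still holds w.h.p.\ so redraws are rare, and your bit-by-bit restructuring of the query correctly restricts comparisons to power-of-two-times-$\tau$ lengths; the approach is arguably simpler, trading the paper's self-contained half-fingerprint induction for standard suffix-array machinery. What the paper's method buys in exchange is independence from suffix arrays and a tunable space profile: as noted in the paper, its verification can be run in $O(n/\tau)$ space at the cost of $O(n\log n + n\tau)$ preprocessing time, which your method cannot match, since the suffix array and the full prefix-fingerprint table are inherently $\Theta(n)$-space.
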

Unless otherwise stated, the bounds in the theorem are worst-case, and with high probability means with probability at least $1-1/n^c$ for any constant $c$.

Our results provide a smooth time-space trade-off that allows several new and non-trivial bounds. For instance, with $\tau = \sqrt{n}$ \autoref{thm:main}(i), gives a solution using $O(n^{3/4})$ space and $O(\sqrt{n})$ time. If we allow randomisation, we can use \autoref{thm:main}(iii) to further reduce the space to $O(\sqrt{n})$ while using query time $O(\sqrt{n}\log(|\LCE(i,j)|/\sqrt{n})) = O(\sqrt{n}\log n)$. Note that at both extremes of the trade-off ($\tau = 1$ or $\tau=n$) we almost match the previously known bounds. In the conference version of this paper~\cite{Bille2012}, we mistakenly claimed the preprocessing space of \autoref{thm:main}(iii) to be $O(n/\tau)$ but it is in fact $O(n)$. It is possible to obtain $O(n/\tau)$ preprocessing space by using $O(n\log n + n\tau)$ preprocessing time. For most applications, including those mentioned in this paper, this issue have no implications, since the time to perform the LCE queries typically dominates the preprocessing time.

Furthermore, we also consider LCE queries between two strings, i.e.\@ the pair of indices to an LCE query is from different strings. We present a general result that reduces the query on two strings to a single one of them. When one of the strings is significantly smaller than the other, we can combine this reduction with \autoref{thm:main} to obtain even better time-space trade-offs.

Finally, we give a reduction from \emph{range minimum queries} that shows that any data structure using $O(n/\tau)$ bits space in addition to the string $T$ must use at least $\Omega(\tau)$ time to answer an LCE query. Hence, the time-space trade-offs of \autoref{thm:main}(ii) and \autoref{thm:main}(iii) are almost optimal.

\subsection{Techniques}
The high-level idea in \autoref{thm:main} is to combine and balance out the two extreme solutions for the LCE problem. For \autoref{thm:main}(i) we use \emph{difference covers} to sample a set of suffixes of $T$ of size $O(n/\sqrt{\tau})$. We store a compact trie combined with an NCA data structure for this sample using $O(n/\sqrt{\tau})$ space. To answer an LCE query we compare characters from $T$ until we get a mismatch or reach a pair of sampled suffixes, which we then immediately compute the answer for. By the properties of difference covers we compare at most $O(\tau)$ characters before reaching a pair of sampled suffixes. Similar ideas have previously been used to achieve trade-offs for suffix array and LCP array construction~\cite{Kar2006,Pug2008}.

For \autoref{thm:main}(ii) and \autoref{thm:main}(iii) we show how to use Rabin-Karp fingerprinting~\cite{KR1987} instead of difference covers to reduce the space further. We show how to store a sample of $O(\infrac{n}{\tau})$ fingerprints, and how to use it to answer LCE queries using doubling search combined with directly comparing characters. This leads to the output-sensitive $O(\tau \log(\infrac{|\LCE(i,j)|}{\tau}))$ query time. We reduce space compared to \autoref{thm:main}(i) by computing fingerprints on-the-fly as we need them. Initially, we give a Monte-Carlo style randomised data structure (\autoref{thm:main}(ii)) that may answer queries incorrectly. However, this solution uses only $O(n)$ preprocessing time and is therefore of independent interest in applications that can tolerate errors. To get the error-free Las-Vegas style bound of \autoref{thm:main}(iii) we need to verify the fingerprints we compute are collision free; i.e.\@ two fingerprints are equal if and only if the corresponding substrings of $T$ are equal. The main challenge is to do this in only $O(n \log n)$ time. We achieve this by showing how to efficiently verify fingerprints of composed samples which we have already verified, and by developing a search strategy that reduces the fingerprints we need to consider.

Finally, the reduction for LCE on two strings to a single string is based on a simple and compact encoding of the larger string using the smaller string. The encoding could be of independent interest in related problems, where we want to take advantage of different length input strings.

\subsection{Applications}
With \autoref{thm:main} we immediately obtain new results for problems based on LCE queries. We review some the most important below.

\subsubsection{Approximate String Matching}
Given strings $P$ and $T$ and an error threshold $k$, the \emph{approximate string matching problem} is to report all ending positions of substrings of $T$ whose \emph{edit distance} to $P$ is at most $k$. The edit distance between two strings is the minimum number of insertions, deletions, and substitutions needed to convert one string to the other. Let $m$ and $n$ be the lengths of $P$ and $T$. The Landau-Vishkin algorithm~\cite{LV1989} solves approximate string matching using $O(nk)$ LCE queries on $P$ and substrings of $T$ of length $O(m)$. Using the standard linear space and constant time LCE data structure, this leads to a solution using $O(nk)$ time and $O(m)$ space (the $O(m)$ space bound follows by the standard trick of splitting $T$ into overlapping pieces of size $O(m)$). If we plug in the results from \autoref{thm:main} we immediately obtain the following result.
\begin{theorem}\label{thm:approximatematching}
Given strings $P$ and $T$ of lengths $m$ and $n$, respectively, and a parameter $\tau$, $1 \leq \tau \leq m$, we can solve approximate string matching
\begin{itemize}
\item[(i)] in $O\big(\frac{m}{\sqrt{\tau}}\big)$ space and $O(nk\cdot\tau + \frac{nm}{\sqrt{\tau}})$ time, or
\item[(ii)] in $O\big(\frac{m}{\tau}\big)$ space and $O(nk\cdot\tau\log m)$  time with high probability.
\end{itemize}
\end{theorem}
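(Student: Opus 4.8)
The plan is to treat the LCE computations as a black box and combine the Landau--Vishkin algorithm with the data structures of \autoref{thm:main}. First I would recall that Landau--Vishkin solves approximate string matching by computing the diagonals of the edit-distance table using $O(nk)$ LCE queries in total, where each query asks for the longest common extension of a suffix of $P$ and a suffix of (a window of) $T$. To keep every relevant string short, I would apply the standard windowing trick: cut $T$ into $O(n/m)$ overlapping blocks of length $O(m)$, with consecutive blocks overlapping by $m$ characters so that every occurrence of $P$ (of length $O(m)$) lies entirely within some block. Summed over all blocks, the number of LCE queries remains $O(nk)$.

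For each block $W$ I would form the string $S = P \# W$, where $\#$ is a sentinel occurring in neither string, so that $|S| = O(m)$ and an LCE query between position $i$ of $P$ and position $j$ of $W$ reduces to the single-string query $\LCE(i, m+1+j)$ on $S$ (the sentinel prevents the extension from crossing the boundary). Crucially, any such LCE value is at most $m$, so in part (ii) the per-query cost $O(\tau\log(|\LCE(i,j)|/\tau))$ simplifies to $O(\tau\log(m/\tau)) = O(\tau\log m)$. I would process the blocks one at a time, reusing the space, so the working space never exceeds that of a single data structure built on a string of length $O(m)$.

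For part (i) I would build the structure of \autoref{thm:main}(i) on each $S$. Its space is $O(m/\sqrt{\tau})$, each of the $O(nk)$ queries costs $O(\tau)$, and preprocessing one block costs $O(m^2/\sqrt{\tau})$; over the $O(n/m)$ blocks the preprocessing totals $O(nm/\sqrt{\tau})$, giving overall time $O(nk\tau + nm/\sqrt{\tau})$ and space $O(m/\sqrt{\tau})$. For part (ii) I would instead use the Monte-Carlo structure of \autoref{thm:main}(ii), whose data-structure and preprocessing space are both $O(m/\tau)$ and whose preprocessing time is $O(m)$ per block (totalling $O(n)$, which is dominated by the queries); the structure of \autoref{thm:main}(iii) is avoided here precisely because its $O(m)$ preprocessing space would spoil the $O(m/\tau)$ bound. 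Each query then costs $O(\tau\log m)$, so the total query time is $O(nk\tau\log m)$, matching the claim.

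The only point requiring a little care is correctness of part (ii): \autoref{thm:main}(ii) guarantees that all queries on a single structure are answered correctly with high probability, so I would fix the failure-probability exponent large enough and take a union bound over the $O(n/m)$ blocks (equivalently over the $O(nk)$ queries) to conclude that the entire procedure is correct with high probability in $n$. Everything else is a direct substitution of the bounds of \autoref{thm:main} into the query count and block count, so the main obstacle is not any new difficulty but rather the bookkeeping of the windowing and the two-string reduction.
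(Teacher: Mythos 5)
Your proposal is correct and takes essentially the same route as the paper: the paper's entire proof is the observation that Landau--Vishkin uses $O(nk)$ \LCE queries on $P$ against $O(m)$-length windows of $T$ (via the standard overlapping-window trick), followed by a direct plug-in of the bounds from \autoref{thm:main}. The details you supply---the sentinel concatenation $P\#W$, the per-block preprocessing accounting that yields the $O(nm/\sqrt{\tau})$ term in (i), bounding $\log(|\LCE(i,j)|/\tau)$ by $O(\log m)$ in (ii), and the union bound over blocks---are exactly the steps the paper treats as standard and leaves implicit.
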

For instance for $\tau=(\frac{m}{k})^{2/3}$ \autoref{thm:approximatematching}(i) gives a solution using $O(n m^{2/3} k^{1/3})$ time and $O(m^{2/3}k^{1/3})$ space. To the best of our knowledge these are the first non-trivial bounds for approximate string matching using $o(m)$ space.

\subsubsection{Palindromes}
Given a string $T$ the \emph{palindrome problem} is to report the set of all \emph{maximal palindromes} in $T$. A substring $T[i \ldots j]$ is a maximal palindrome iff $T[i\ldots j] = T[i \ldots j]^R$ and $T[i -1 \ldots j+1] \neq T[i-1 \ldots j+1]^R$. Here $T[i \ldots j]^R$ denotes the reverse of $T[i \ldots j]$. Any palindrome in $T$ occurs in the middle of a maximal palindrome, and thus the set of maximal palindromes compactly represents all palindromes in $T$. The palindrome problem appears in a diverse range of applications, see e.g.\@~\cite{Allouche2003, Jeuring1994, Matsubara2009, Breslauer1995, Lu2007, Kolpakov2008, Gusfield1997}.

We can trivially solve the problem in $O(n^2)$ time and $O(1)$ space by a linear search at each position in $T$ to find the maximal palindrome. With LCE queries we can immediately speed up this search. Using the standard $O(n)$ space and constant time solution to the LCE problem this immediately implies an algorithm for the palindrome problem that uses $O(n)$ time and space (this bound can also be achieved without LCE queries~\cite{Manacher1975}). Using \autoref{thm:main} we immediately obtain the following result.
\begin{theorem}\label{thm:palindromes}
Given a string of length $n$ and a parameter $\tau$, $1 \leq \tau \leq n$, we can solve the palindrome problem
\begin{itemize}
\item[(i)] in $O\big(\frac{n}{\sqrt{\tau}}\big)$ space and $O\big(\frac{n^2}{\sqrt{\tau}} + n\tau\big)$ time.
\item[(ii)] in $O\big(\frac{n}{\tau}\big)$ space and $O(n\cdot \tau \log n)$ time with high probability.
\end{itemize}
\end{theorem}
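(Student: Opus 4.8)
The plan is to reduce the palindrome problem to a collection of \LCE queries, one per candidate palindrome center, and then plug in the two trade-offs of \autoref{thm:main}. First I would observe that every maximal palindrome is either of odd length, centered at a single position $c \in \{1,\ldots,n\}$, or of even length, centered between positions $c$ and $c+1$. This gives exactly $2n-1$ distinct centers to examine, so it suffices to show that the maximal palindrome at each center can be found with a single \LCE query, after which summing the work over all centers yields the claimed bounds.

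The key observation is the standard reduction from palindromes to longest common extensions on $T$ and its reverse. For the odd-length center at position $c$, the maximal palindrome radius equals the length of the longest common prefix of the suffix of $T$ starting at $c+1$ and the \emph{reverse} suffix ending at $c-1$; the even case is analogous with a shift by one. To express this as an \LCE query on a single string (as required by \autoref{thm:main}, which is stated for one string $T$), I would form the string $T' = T \# T^R$, where $\#$ is a separator not occurring in $T$ and $T^R$ is the reverse of $T$. A longest common extension between an appropriate position in the $T$-part and a corresponding position in the $T^R$-part then computes exactly the palindrome radius at a given center, with the separator guaranteeing the extension never crosses the boundary. Thus each of the $2n-1$ centers costs one \LCE query on a string of length $2n+1$.

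With this reduction in hand, the two parts follow by direct substitution. For part (i), I build the data structure of \autoref{thm:main}(i) on $T'$ using $O(n/\sqrt{\tau})$ space; each query costs $O(\tau)$ time, and there are $O(n)$ of them, giving $O(n\tau)$ query time, plus the $O(n^2/\sqrt{\tau})$ preprocessing time from \autoref{thm:main}(i), for a total of $O(n^2/\sqrt{\tau} + n\tau)$ as claimed. For part (ii), I instead build the data structure of \autoref{thm:main}(ii) in $O(n/\tau)$ space; each of the $O(n)$ queries costs $O(\tau\log(|\LCE|/\tau)) = O(\tau\log n)$ time since $|\LCE| \le n$, giving the stated $O(n\tau\log n)$ bound, and the correctness holds with high probability because \autoref{thm:main}(ii) answers all queries correctly with high probability.

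The only genuinely delicate point is ensuring the reduction to a single string is both correct and space-respecting. Correctness requires that the separator $\#$ cleanly prevents a palindrome-measuring extension from running past the intended boundary, and that the index arithmetic translating a center $c$ into positions in $T'$ is exactly right for both the odd and even cases; these are routine but must be stated carefully. The space concern is that forming $T'$ explicitly would consume $O(n)$ extra space, which would dominate the $o(n)$ space budget of part (ii); I would therefore note that $T'$ need not be materialised, since a character access to position $i$ of $T'$ can be simulated in $O(1)$ time from the read-only input $T$ by a simple index reflection, so the data structure of \autoref{thm:main} can operate on $T'$ without storing it. With that observation the space bounds are preserved and the proof is complete.
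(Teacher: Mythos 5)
Your proposal is correct and takes essentially the same route as the paper: the paper obtains this theorem as an immediate consequence of \autoref{thm:main}, using one \LCE query per palindrome center, which is exactly your reduction (via concatenating $T$ with its reverse behind a sentinel and querying the resulting single string). Your extra care about not materialising the concatenated string --- simulating character access by index reflection so that the $o(n)$ space bounds survive --- correctly fills in a detail the paper leaves implicit.
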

For $\tau = \omega(1)$, these are the first sublinear space bounds using $o(n^2)$ time. For example, for $\tau=n^{2/3}$ \autoref{thm:palindromes}(i) gives a solution using $O(n^{5/3})$ time and $O(n^{2/3})$ space. Similarly, we can substitute our LCE data structures in the LCE-based variants of palindrome problems such as \emph{complemented palindromes}, \emph{approximate palindromes}, or \emph{gapped palindromes}, see e.g.\@~\cite{Kolpakov2008}.

\subsubsection{Tandem Repeats} Given a string $T$, the \emph{tandem repeats problem} is to report all squares, i.e.\@ consecutive repeated substrings in $T$. Main and Lorentz~\cite{Main1984} gave a simple solution for this problem based on LCE queries that achieves $O(n)$ space and $O(n\log n + \occ)$ time, where $\occ$ is the number of tandem repeats in $T$. Using different techniques Gąsieniecs et al.~\cite{Gasieniec2005} gave a solution using $O(1)$ space and $O(n\log n + \occ)$ time. Using \autoref{thm:main} we immediately obtain the following result. 
\begin{theorem}
Given a string of length $n$ and parameter $\tau$, $1 \leq \tau \leq n$, we can solve the tandem repeats problem
\begin{itemize}
\item[(i)] in $O\big(\frac{n}{\sqrt{\tau}}\big)$ space and $O\big(\frac{n^2}{\sqrt{\tau}} + n\tau\cdot\log n + occ\big)$ time.
\item[(ii)] in $O\big(\frac{n}{\tau}\big)$ space and $O\big( n\tau\cdot \log^2 n + \occ\big)$ time with high probability.
\end{itemize}
\end{theorem}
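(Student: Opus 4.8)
The plan is to combine the divide-and-conquer algorithm of Main and Lorentz~\cite{Main1984} with the LCE data structures of \autoref{thm:main}, treating the former as a black box that issues LCE queries. Recall that the Main--Lorentz algorithm reports all tandem repeats (squares) by recursively splitting $T$ into two halves, reporting the repeats confined to each half, and then detecting those straddling the split point. Detecting a straddling repeat of a given period reduces to a \emph{forward} longest common extension query in $T$ together with a \emph{backward} one, i.e.\@ a longest common extension query in the reversed string $T^R$. A standard accounting bounds the total number of such queries by $O(n\log n)$: at each of the $O(\log n)$ recursion levels the active blocks partition $T$, and the queries issued for a block of length $\ell$ number $O(\ell)$, so each level contributes $O(n)$ queries.

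First I would build two instances of the data structure of \autoref{thm:main}, one for $T$ and one for $T^R$, so that both forward and backward extensions are supported; since $|T|=|T^R|=n$ this only doubles the space and preprocessing costs and leaves every asymptotic bound unchanged. I would then run the Main--Lorentz recursion verbatim, answering each of its $O(n\log n)$ queries through these two structures and spending an additional $O(\occ)$ time to emit the repeats.

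For part (i) I plug in \autoref{thm:main}(i): the two structures occupy $O(n/\sqrt{\tau})$ space and are built in $O(n^2/\sqrt{\tau})$ time, while each of the $O(n\log n)$ queries costs $O(\tau)$, contributing $O(n\tau\log n)$. Adding the output term yields the claimed $O(n^2/\sqrt{\tau} + n\tau\log n + \occ)$ time in $O(n/\sqrt{\tau})$ space. For part (ii) I plug in the randomised structure of \autoref{thm:main}(ii), using $O(n/\tau)$ space built in $O(n)$ time. Since $|\LCE(i,j)|\le n$, each query runs in $O(\tau\log(n/\tau))=O(\tau\log n)$ time, so the $O(n\log n)$ queries cost $O(n\tau\log^2 n)$, giving the stated bound (the linear preprocessing being dominated). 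Because each query of \autoref{thm:main}(ii) is answered correctly with probability at least $1-1/n^{c}$, a union bound over the $O(n\log n)$ queries, after increasing the constant $c$, guarantees that every query, and hence the entire reported set, is correct with high probability.

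The only genuinely non-routine point is the $O(n\log n)$ bound on the number of LCE queries issued by Main--Lorentz, together with the observation that both forward and backward extensions are required and are obtained by instantiating \autoref{thm:main} on $T$ and on $T^R$. Once these are in place, both statements follow by direct substitution of the trade-offs of \autoref{thm:main}, with a union bound controlling the error probability in part (ii).
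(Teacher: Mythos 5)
Your proposal is correct and follows essentially the same route as the paper, which obtains this theorem by plugging the trade-offs of \autoref{thm:main} directly into the $O(n\log n)$-query LCE-based algorithm of Main and Lorentz~\cite{Main1984}; your accounting of preprocessing, query, and output costs reproduces exactly the stated bounds. The details you add---maintaining structures on both $T$ and its reverse to support backward extensions, and the union bound over queries in part (ii)---are points the paper leaves implicit but do not constitute a different approach.
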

While this does not improve the result by Gąsieniecs et al.~\cite{Gasieniec2005} it provides a simple LCE-based solution. Furthermore, our result generalises to the approximate versions of the tandem repeats problem, which also have solutions based on LCE queries~\cite{Landau2001}.

\section{The Deterministic Data Structure}
We now show \autoref{thm:main}(i). Our deterministic time-space trade-off is based on sampling suffixes using \emph{difference covers}.

\subsection{Difference Covers}
A \emph{difference cover modulo $\tau$} is a set of integers $D \subseteq \{0,1,\ldots, \tau-1\}$ such that for any distance $d \in \{0,1,\ldots, \tau-1\}$, $D$ contains two elements separated by distance $d$ modulo $\tau$ (see \autoref{ex:differencecover}).
\begin{example}\label{ex:differencecover}
The set $D=\{1,2,4\}$ is a difference cover modulo $5$.
\begin{center}
\begin{tabular}{cp{1cm}c}
\begin{tabular}{|c|c|c|c|c|c|} \hline
$d$ & $0$ &  $1$ & $2$ & $3$ & $4$ \\ \hline
 $i,j$ & $1,1$ &  $2,1$ & $1,4$ & $4,1$ & $1,2$ \\ \hline
\end{tabular}
&
&
$\vcenter{\hbox{\includegraphics{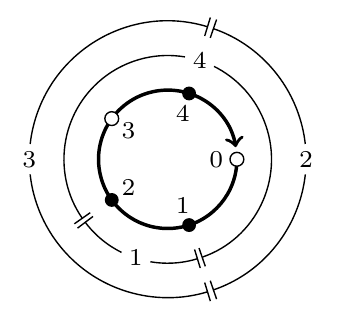}}}$ \\
\end{tabular}
\end{center}
\end{example}
A difference cover $D$ can cover at most $|D|^2$ differences, and hence $D$ must have size at least $\sqrt{\tau}$. We can also efficiently compute a difference cover within a constant factor of this bound.
\begin{lemma}[Colbourn and Ling~\cite{Col2000}]\label{lem:differencecover}
For any $\tau$, a difference cover modulo $\tau$ of size at most $\sqrt{1.5\tau}+6$ can be computed in $O(\sqrt{\tau})$ time.
\end{lemma}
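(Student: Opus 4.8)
The statement bundles two claims: a size bound $|D| \le \sqrt{1.5\tau}+6$ and the existence of an $O(\sqrt{\tau})$-time construction. Since the matching lower bound $|D|\ge\sqrt{\tau}$ has already been noted, the task is purely to exhibit an explicit small cover and argue that it can be written down quickly. The plan is to build $D$ as a union of a few arithmetic progressions (``blocks'') at different scales, chosen so that the pairwise differences \emph{across} blocks sweep out every residue in $\{0,1,\ldots,\floor{\tau/2}\}$. The remaining residues then come for free from the symmetry of difference covers: $i-j\equiv d \pmod\tau$ holds iff $j-i\equiv \tau-d \pmod\tau$, so it suffices to realise every distance up to $\floor{\tau/2}$.

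First I would establish the clean two-block version to pin down the $O(\sqrt{\tau})$ behaviour. Put $s=\ceil{\sqrt{\tau/2}}$ and take $A=\{0,1,\ldots,s-1\}$ together with the multiples $B=\{0,s,2s,\ldots,ts\}$, where $t=\ceil{\floor{\tau/2}/s}+1$. For any distance $d\le\floor{\tau/2}$ write $d=qs+p$ with $0\le p<s$; if $p=0$ then $d=qs-0$ is the difference of two elements of $B$, and if $p>0$ then $d=(q+1)s-(s-p)$ is the difference of $(q+1)s\in B$ and $s-p\in A$ (here $q+1\le t$ holds by the choice of $t$). Hence $A\cup B$ is a difference cover, and its size is $s+t\approx 2\sqrt{\tau/2}=\sqrt{2\tau}$, already $O(\sqrt{\tau})$. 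Because every element of $A\cup B$ is given by a closed-form formula, the whole set is emitted in $O(\sqrt{\tau})$ time and space, which settles the computational claim up to the leading constant.

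The remaining and genuinely delicate point is shrinking the constant from $\sqrt{2}$ to $\sqrt{1.5}$. The two-block cover is wasteful: the internal differences of $A$ and the cross differences $B-A$ overlap heavily, so many pairs are spent reaching distances already covered. The Colbourn--Ling refinement replaces the single coarse scale by a more carefully spaced family of blocks so that almost every ordered pair realises a fresh distance; quantitatively one aims for $|D|\approx 3r$ blocks covering $\tau\approx 6r^2+O(r)$ distances, which gives $|D|\approx\sqrt{1.5\tau}$, with an additive slack of at most $6$ absorbing rounding and the values of $\tau$ not of this exact form. I expect the main obstacle to be precisely this verification: confirming that the tuned block pattern leaves no residue in $\{0,\ldots,\floor{\tau/2}\}$ uncovered requires a somewhat ad hoc case analysis over the sub-ranges in which a target distance can fall, together with the bookkeeping needed to guarantee $|D|\le\sqrt{1.5\tau}+6$ for \emph{every} $\tau$ rather than only the convenient ones. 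Once the pattern and its case analysis are fixed, the $O(\sqrt{\tau})$ running time is again immediate, since $D$ is produced directly from closed-form expressions for its elements.
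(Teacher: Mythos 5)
There is no internal proof to compare against here: the paper imports this lemma as a black box from Colbourn and Ling~\cite{Col2000}. So the only question is whether your argument itself establishes the stated bound, and it does not --- the gap is exactly where you flag it. Your two-block construction $A\cup B$ is sound: the case analysis on $d=qs+p$ works (noting only that elements of $B$ exceeding $\tau-1$ should be reduced modulo $\tau$, which preserves differences), the symmetry reduction to distances $d\le\floor{\tau/2}$ is legitimate, and the result is a difference cover of size roughly $\sqrt{2\tau}$ computable in $O(\sqrt{\tau})$ time. But the lemma claims size $\sqrt{1.5\tau}+6$, and the passage from the constant $\sqrt{2}\approx 1.41$ to $\sqrt{1.5}\approx 1.22$ is the entire mathematical content of Colbourn and Ling's result. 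Your final paragraph names the right target parameters (roughly $3r$ elements whose pairwise differences cover roughly $6r^2$ residues, giving ratio $3/\sqrt{6}=\sqrt{1.5}$) but never exhibits the block pattern, let alone verifies that it leaves no residue uncovered or that the additive slack stays below $6$ for all $\tau$; you explicitly defer this as ``the main obstacle.'' An argument that stops at the point where the hard case analysis begins has not proven the lemma; what you have actually proven is the weaker statement with constant $\sqrt{2}$.

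Two mitigating observations. The weaker $\sqrt{2\tau}+O(1)$ bound would in fact suffice for every use the paper makes of the lemma, since \autoref{thm:main}(i) only needs $|\mathcal{S}| = O(n/\sqrt{\tau})$, which any $O(\sqrt{\tau})$-size cover delivers; so as a self-contained substitute for the citation your construction is adequate for the paper's purposes. But as a proof of the statement as written --- with the constant $\sqrt{1.5}$ and the additive $6$ --- it is incomplete, and completing it would amount to reproducing the Colbourn--Ling construction itself.
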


\subsection{The Data Structure}
Let $T$ be a string of length $n$ and let $\tau$, $1 \leq \tau \leq n$, be a parameter. Our data structure consists of the compact trie of a sampled set of suffixes from $T$ combined with a NCA data structure. The sampled set of suffixes $\mathcal{S}$ is the set of suffixes obtained by overlaying a difference cover on $T$ with period $\tau$, that is,
\[\mathcal S = \left \{ i ~\mid~ 1 \leq i \leq n ~\wedge~ i \bmod{\tau} \in D \right \} \; .\]
\begin{example}\label{ex:suffixsample}
Consider the string $T = \textrm{dbcaabcabcaabcac}$. As shown below, using the difference cover from \autoref{ex:differencecover}, we obtain the suffix sample $\mathcal S = \{ 1,2,4,6,7,9,11,12,14,16 \}$.
\begin{center}
\includegraphics{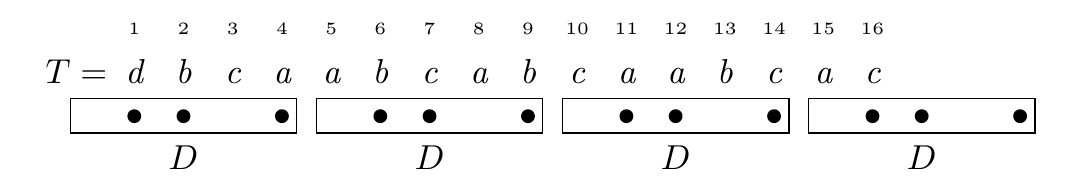}
\end{center}
\end{example}
By \autoref{lem:differencecover} the size of $\mathcal S$ is $O(\infrac{n}{\sqrt{\tau}})$. Hence the compact trie and the NCA data structures use $O(\infrac{n}{\sqrt{\tau}})$ space. We construct the data structure in $O(\infrac{n^2}{\sqrt{\tau}})$ time by inserting each of the $O(\infrac{n}{\sqrt{\tau}})$ sampled suffixes in $O(n)$ time.

To answer an $\LCE(i,j)$ query we explicitly compare characters starting from $i$ and $j$ until we either get a mismatch or we encounter a pair of sampled suffixes. If we get a mismatch we simply report the length of the LCE. Otherwise, we do a NCA query on the sampled suffixes to compute the LCE. Since the distance to a pair of sampled suffixes is at most $\tau$ the total time to answer a query is $O(\tau)$.
This concludes the proof of \autoref{thm:main}(i).

\section{The Monte-Carlo Data Structure}\label{sec:MonteCarlo}
We now show \autoref{thm:main}(ii) which is an intermediate step towards proving \autoref{thm:main}(iii) but is also of independent interest, providing a Monte-Carlo time-space trade-off. The technique is based on sampling suffixes using \emph{Rabin-Karp fingerprints}. These fingerprints will be used to speed up queries with large $\LCE$ values while queries with small $\LCE$ values will be handled naively.

%

\subsection{Rabin-Karp fingerprints}

 Rabin-Karp fingerprints are defined as follows. Let $2n^{c+4} < p \leq 4n^{c+4}$ be some prime and choose $b \in \mathbb{Z}_p$ uniformly at random.  Let $S$ be any substring of $T$, the fingerprint $\phi(S)$ is given by, \[  \phi(S)=\sum_{k=1}^{|S|} S[k] b^k \bmod p\,. \] \autoref{lem:fingerprint} gives a crucial property of these fingerprints (see e.g.~\cite{KR1987} for a proof). That is with high probability we can determine whether any two substrings of $T$ match in constant time by comparing their fingerprints.

\begin{lemma}\label{lem:fingerprint}
Let $\phi$  be a fingerprinting function picked uniformly at random (as described above). With high probability, 
\begin{alignat}{3} \phi(&T[i\ldots i+\alpha -1])&\ =\ &\phi(&T[j \ldots j+\alpha-1]&) \notag \\ \text{ iff } &T[i \ldots i+ \alpha -1]&\ =\ &&T[j \ldots j+\alpha-1] & \text{\ \ \ for all $i,j,\alpha$.} \end{alignat}

\end{lemma}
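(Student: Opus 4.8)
The plan is to prove the nontrivial direction---that unequal equal-length substrings almost surely receive distinct fingerprints---and to observe that the converse is immediate and requires no randomness. Indeed, if $T[i \ldots i+\alpha-1] = T[j \ldots j+\alpha-1]$, then the two substrings agree character by character, so their fingerprints coincide deterministically for every choice of $b$. The entire probabilistic content of the lemma therefore lies in bounding the probability of a \emph{collision}: two unequal substrings of the same length mapping to the same fingerprint. Equivalently, the only failure event to control is that some triple $(i,j,\alpha)$ with $T[i \ldots i+\alpha-1] \neq T[j \ldots j+\alpha-1]$ nonetheless has equal fingerprints.

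First I would fix a single such triple $(i,j,\alpha)$, set $S_1 := T[i \ldots i+\alpha-1]$ and $S_2 := T[j \ldots j+\alpha-1]$ with $S_1 \neq S_2$, and analyse the collision probability over the random choice of $b$. Writing out the definition,
\[
\phi(S_1) - \phi(S_2) \;=\; \sum_{k=1}^{\alpha} \bigl(S_1[k] - S_2[k]\bigr)\, b^k \pmod p,
\]
which is the evaluation at $X = b$ of the polynomial $P(X) = \sum_{k=1}^{\alpha} (S_1[k]-S_2[k])\,X^k$ over the field $\mathbb{Z}_p$. Since $S_1 \neq S_2$, some coefficient $S_1[k]-S_2[k]$ is nonzero as an integer, and provided $p$ exceeds the alphabet range it remains nonzero modulo $p$; hence $P$ is a nonzero polynomial of degree at most $\alpha \leq n$. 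A nonzero degree-$d$ polynomial over a field has at most $d$ roots, so at most $n$ of the $p$ possible values of $b$ make $P(b) \equiv 0 \pmod p$. As $b$ is uniform on $\mathbb{Z}_p$, the collision probability for this fixed triple is at most $n/p$.

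Finally I would take a union bound over all triples. There are at most $n^3$ choices of $(i,j,\alpha)$ with $1 \leq i,j \leq n$ and $1 \leq \alpha \leq n$, so the probability that \emph{any} unequal pair collides is at most $n^3 \cdot (n/p) = n^4/p$. By the choice $2n^{c+4} < p$ this is strictly less than $1/(2n^c)$, which is high probability in the sense defined in the paper. Thus a single random $b$ simultaneously certifies the fingerprint equivalence for all $i,j,\alpha$, as claimed.

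The main obstacle is the step asserting that $P$ is a nonzero polynomial \emph{modulo} $p$: this quietly relies on the alphabet being small relative to $p$, so that a genuine character mismatch $S_1[k] \neq S_2[k]$ survives reduction mod $p$. Since $p = \Theta(n^{c+4})$ is polynomially large, this holds whenever the alphabet has size $n^{O(1)}$, which I would state explicitly (mapping characters into $\{0,\ldots,p-1\}$ if necessary). The remaining pieces---the root count and the union bound---are routine once the polynomial is known to be nonzero, and the exponent $c+4$ in the bound on $p$ is chosen precisely so that the $n^4$ factor from the union bound leaves the desired $1/n^c$ failure probability.
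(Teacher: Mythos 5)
Your proof is correct and supplies exactly the argument the paper leaves implicit: the paper never proves this lemma itself, deferring to the Karp--Rabin reference, and your polynomial-root-counting plus union-bound argument is the standard proof underlying that citation. The bookkeeping also matches the paper's parameters precisely --- at most $n^3$ triples, collision probability at most $n/p$ per triple, and $p > 2n^{c+4}$ giving failure probability below $1/(2n^{c})$ --- and your explicit caveat that the alphabet must embed injectively into $\mathbb{Z}_p$ is the one hypothesis the paper's setup takes for granted.
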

%


\subsection{The Data Structure} 
The data structure consists of the fingerprint, $\phi_k$, of each suffix of the form $T[k\tau \ldots n]$ for $0 < k < n/\tau$, i.e.\@ $\phi_k=\phi(T[k \tau \ldots n])$. Note that there are $O(n/\tau)$ such suffixes and the starting points of two consecutive suffix  are $\tau$ characters apart. Since each fingerprint uses constant space the space usage of the data structure is $O(n/\tau)$. The $n/\tau$ fingerprints can be computed in left-to-right order by a single scan of $T$ in $O(n)$ time.

\paragraph{Queries} The key property we use to answer a query is given by \autoref{lem:substring-fingerprint}. 

\begin{lemma}\label{lem:substring-fingerprint}
The fingerprint $\phi(T[i \ldots i+\alpha-1])$ of any substring $T[i \ldots i+\alpha-1]$ can be constructed in $O(\tau)$ time. If $i,\alpha$ are divisible by $\tau$, the time becomes $O(1)$.
\end{lemma}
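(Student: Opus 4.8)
The plan is to exploit the telescoping structure of the fingerprint definition $\phi(S)=\sum_{k=1}^{|S|} S[k]b^k \bmod p$, which makes fingerprints of prefixes of $T$ behave like a ``running sum'' that can be combined algebraically. First I would introduce the prefix fingerprints $\phi(T[1\ldots m])$ for the relevant positions and observe that the fingerprint of an arbitrary substring can be written as a difference of two prefix fingerprints, suitably rescaled by a power of $b$. Concretely, since $\phi(T[1\ldots j]) - \phi(T[1\ldots i-1]) = \sum_{k=i}^{j} T[k]b^k \bmod p$ and $\phi(T[i\ldots j]) = \sum_{k=i}^{j} T[k]b^{k-i+1} \bmod p$, the two differ only by the factor $b^{-(i-1)}$ modulo $p$. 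Hence
\[
\phi(T[i\ldots i+\alpha-1]) \;=\; b^{-(i-1)}\bigl(\phi(T[1\ldots i+\alpha-1]) - \phi(T[1\ldots i-1])\bigr) \bmod p\,,
\]
so once the two bracketing prefix fingerprints are available, the substring fingerprint follows from one modular subtraction, one modular inverse power of $b$, and one multiplication, all in $O(1)$ time (the inverse $b^{-1}\bmod p$ is precomputed once, and $b^{-(i-1)}$ is obtained by fast exponentiation, or maintained incrementally).

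The key then is to produce the needed prefix fingerprints $\phi(T[1\ldots i-1])$ and $\phi(T[1\ldots i+\alpha-1])$ cheaply. Here I would use the stored sampled fingerprints: the suffix fingerprints $\phi_k = \phi(T[k\tau\ldots n])$ let us recover a prefix fingerprint at any multiple of $\tau$ in $O(1)$ time, again by a telescoping relation between suffix fingerprints of $T$ (the difference $\phi_k - \phi_{k'}$ for $k<k'$ isolates the block $T[k\tau\ldots k'\tau-1]$ up to a power of $b$). For an arbitrary endpoint $i$, I would round down to the nearest multiple of $\tau$, say $i' = \lfloor (i-1)/\tau\rfloor\cdot\tau$, retrieve the fingerprint up to $i'$ from the sample in $O(1)$ time, and then extend it to position $i-1$ by scanning and incorporating the at most $\tau$ remaining characters one at a time, each costing $O(1)$. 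This scan dominates and gives the claimed $O(\tau)$ bound in the general case; when $i$ and $\alpha$ are already divisible by $\tau$, both endpoints land exactly on sampled positions, no scan is required, and the whole computation collapses to $O(1)$.

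The main obstacle I anticipate is bookkeeping the powers of $b$ (and their inverses) correctly, since the sampled values $\phi_k$ are fingerprints of \emph{suffixes} of $T$ rather than prefixes, and the exponent of $b$ attached to each character $T[k]$ in a suffix fingerprint depends on the suffix's starting position. I would therefore be careful to derive the exact normalisation factor relating a stored suffix fingerprint to the prefix-fingerprint convention used above, verifying that the exponent shifts cancel so that the final substring fingerprint carries the intended exponents $b^1,\ldots,b^{\alpha}$. Once this alignment of powers is pinned down, correctness reduces to the identities already noted, and the time bounds follow immediately from counting the $O(1)$ arithmetic operations plus the $O(\tau)$-character extension scan.
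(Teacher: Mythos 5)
Your proposal is correct in substance but proceeds differently from the paper's proof. The paper rounds both endpoints \emph{up} to block boundaries ($k_1=\lceil i/\tau\rceil$, $k_2=\lceil (i+\alpha)/\tau\rceil$), obtains the block-aligned fingerprint in $O(1)$ time by the telescoping identity $\phi(T[k_1\tau\ldots k_2\tau-1])=\phi_{k_1}-\phi_{k_2}\cdot b^{(k_2-k_1)\tau}\bmod p$, and then slides this fixed-length window \emph{leftward} one position at a time until it coincides with $T[i\ldots i+\alpha-1]$; each slide prepends one character and drops one at the back, which involves only multiplications by $b$, so the paper never needs modular inverses. You instead round \emph{down}, write the target as a difference of two prefix sums rescaled by $b^{-(i-1)}$, and fix each endpoint independently with a rightward scan of at most $\tau$ characters. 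Your route buys uniform treatment of arbitrary $\alpha$: the paper's rigid window has length $(k_2-k_1)\tau$, so its slide lands exactly on the target only when $\tau$ divides $\alpha$ (which is all the paper's application ever needs, though the lemma says ``any substring''). The paper's route buys freedom from inverses and renormalisation entirely. Three pieces of the bookkeeping you explicitly deferred do need care. First, true prefix fingerprints $\phi(T[1\ldots j])$ are \emph{not} recoverable from the sample in $O(1)$ time, since nothing stored covers $T[1\ldots\tau-1]$; this is harmless because only the difference of the two prefix sums enters your formula, so anchor all sums at position $\tau$, where differences of stored $\phi_k$'s give exactly the sums over blocks $[k\tau,k'\tau-1]$. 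Second, those directly available sums end at positions of the form $k\tau-1$, not $k\tau$: under your stated rounding $i'=\lfloor (i-1)/\tau\rfloor\cdot\tau$, the divisible case $i=k\tau$ would still trigger a $(\tau-1)$-character scan, so to achieve the claimed $O(1)$ bound you must align to the boundaries $k\tau-1$. Third, $b^{-(i-1)}$ must not be obtained by fast exponentiation in the $O(1)$ case, as that costs $O(\log n)$; precompute the $O(n/\tau)$ powers $b^{\pm k\tau}\bmod p$ and absorb the sub-$\tau$ exponent adjustments into the character scan (and note $b^{-1}$ exists since $p$ is prime and $b\neq 0$ with high probability). All three points resolve in your favour, so your plan goes through.
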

\begin{proof}
Let $k_1=\lceil i/\tau \rceil$ and $k_2= \lceil(i+\alpha)/\tau \rceil$ and observe that we have $\phi_{k_1}$ and $\phi_{k_2}$ stored.
By the definition of $\phi$, we can compute $\phi(T[k_1 \tau \ldots k_2 \tau-1])= \phi_{k_1} - \phi_{k_2} \cdot b^{(k_2-k_1)\tau} \bmod p$ in $O(1)$ time. If $i, \alpha = 0 \bmod \tau$ then $k_1\tau=i$ and $k_2\tau=i+\alpha$ and we are done. Otherwise, similarly we can then convert $\phi(T[k_1 \tau \ldots k_2 \tau-1])$ into $\phi(T[k_1 \tau-1 \ldots k_2 \tau-2])$ in $O(1)$ time by inspecting $T[k_1 \tau-1]$ and $T[k_2 \tau-1]$. By repeating this final step we obtain  $T[i \ldots i+\alpha-1]$ in $O(\tau)$ time.
\end{proof}
We now describe how to perform a query by using fingerprints to compare substrings. We define $\phi_{k}^{\ell}=\phi(T[k\tau \ldots (k+2^\ell) \tau -1])$ which we can compute in $O(1)$ time for any $k, \ell$ by \autoref{lem:substring-fingerprint}.

First consider the problem of answering a query of the form $\LCE(i \tau, j\tau)$. 
Find the largest $\ell$ such that $\phi_{i}^{\ell}=\phi_{j}^\ell$. When the correct $\ell$ is found convert the query into a new query $\LCE((i+2^\ell) \tau,(j+2^\ell)\tau)$ and repeat. If no such $\ell$ exists, explicitly compare $T[i\tau\ldots (i+1)\tau-1]$ and  $T[j\tau\ldots (j+1)\tau-1]$ one character at a time until a mismatch is found. Since no false negatives can occur when comparing fingerprints, such a mismatch exists. Let $\ell_0$ be the value of $\ell$ obtained for the initial query, $\LCE(i \tau, j\tau)$,   and $\ell_q$ the value obtained during the $q$-th recursion. For the initial query, we search for $\ell_0$ in increasing order, starting with $\ell_0=0$. After recursing, we search for $\ell_q$ in descending order, starting with $\ell_{q-1}$. By the maximality of $\ell_{q-1}$, we find the correct $\ell_q$. Summing over all recursions we have $O(\ell_0)$ total searching time and $O(\tau)$ time scanning $T$. The desired query time follows from observing that by the maximality of $\ell_0$, we have that $O(\tau+\ell_0)= O(\tau+ \log( |\LCE(i\tau,j\tau)|/\tau))$.

%



Now consider the problem of answering a query of the form $\LCE(i \tau, j\tau + \gamma)$ where $0<\gamma<\tau$. By \autoref{lem:substring-fingerprint} we can obtain the fingerprint of any substring in $O(\tau)$ time.
This allows us to use a similar approach to the first case. We find the largest $\ell$ such that $\phi(T[j\tau+\gamma \ldots (j+2^\ell)\tau+\gamma-1])= \phi_{i}^{\ell}$ and convert the current query into a new query, $\LCE((i+2^\ell) \tau,(j+2^\ell)\tau + \gamma)$. As we have to apply \autoref{lem:substring-fingerprint} before every comparison, we obtain a total complexity of $O(\tau \log (|\LCE(i \tau, j\tau + \gamma)|/\tau))$.

In the general case an $\LCE(i,j)$ query can be reduced to one of the first two cases by scanning $O(\tau)$ characters in $T$. By \autoref{lem:fingerprint}, all fingerprint comparisons are correct with high probability and the result follows. 


\section{The Las-Vegas Data Structure}\label{sec:LasVegas}

We now show \autoref{thm:main}(iii). The important observation is that when we compare the fingerprints of two strings during a query in \autoref{sec:MonteCarlo}, one of them is of the form $T[j\tau\ldots j\tau+\tau\cdot 2^\ell -1]$ for some $\ell,j$. Consequently, to ensure all queries are correctly computed, it suffices that $\phi$ is \emph{$\tau$-good}:



\begin{definition}\label{dfn:goodfinger}
A fingerprinting function, $\phi$ is $\tau$-\emph{good} on $T$ iff
\begin{alignat}{3}\label{prop:good} \phi(&T[j\tau\ldots j\tau+\tau\cdot 2^\ell -1])&\ =\ &\phi(&T[i \ldots i+\tau\cdot 2^\ell-1]&) \notag \\ \text{ iff } &T[j\tau \ldots j\tau+ \tau\cdot 2^\ell -1]&\ =\ &&T[i \ldots i+\tau\cdot 2^\ell-1]& \text{\ \ \ for all $(i,j,\ell)$.} \end{alignat}
\end{definition}
In this section we give an algorithm which decides whether a given $\phi$ is $\tau$-good on string $T$. The algorithm uses $O(n)$ space and takes $O(n\log n)$ time with high probability. By using $O(n\log n + n\tau)$ preprocessing space the algorithm can also be implemented to use only $O(n/\tau)$ space. By \autoref{lem:fingerprint}, a uniformly chosen $\phi$ is $\tau$-good with high probability and therefore (by repetition) we can generate such a $\phi$ in the same time/space bounds.
For brevity we assume that $n$ and $\tau$ are powers-of-two.

\subsection{The Algorithm} We begin by giving a brief overview of Algorithm~\ref{alg:ver}. For each value of $\ell$ in ascending order (the outermost loop), Algorithm~\ref{alg:ver} checks (\ref{prop:good}) for all $i,j$. For some outermost loop iteration $\ell$, the algorithm  inserts the fingerprint of each block-aligned substring into a dynamic perfect dictionary, $D_\ell$ (lines 3-9). A substring is block-aligned if it is of the form, $T[j\tau  \ldots (j+2^\ell)\tau-1]$ for some $j$ (and block-unaligned otherwise). If more than one block-aligned substring has the same fingerprint, we insert only the left-most as a representative. For the first iteration, $\ell=0$ we also build an Aho-Corasick automaton~\cite{AC1975}, denoted $AC$, with a pattern dictionary containing every block-aligned substring.

 The second stage (lines 12-21) uses a sliding window technique, checking each time we slide whether the fingerprint of the current ($2^\ell\tau$)-length substring occurs in the dynamic dictionary, $D_\ell$. If so we check whether the corresponding substrings match (if not a collision has been revealed and we abort). For $\ell>0$, we use the fact that (\ref{prop:good}) holds for all $i,j$ with $\ell-1$ (otherwise, Algorithm~\ref{alg:ver} would have already aborted) to perform the check in constant time (line 18). I.e. if there is a collision it will be revealed by comparing the fingerprints of the left-half ($L'_i \neq L_k$) or right-half ($R'_i \neq R_k$) of the underlying strings.  For $\ell=0$, the check is performed using the $AC$ automaton (lines 20-21). We achieve this by feeding $T$ one character at a time into the $AC$. By inspecting the state of the $AC$ we can decide whether the current $\tau$-length substring of $T$ matches any block-aligned substring. 

\begin{algorithm}[ht]
\caption{Verifying a fingerprinting function, $\phi$ on string $T$ \label{alg:ver}}
 \begin{algorithmic}[1]
\STATE \COMMENT{$AC$ is an Aho-Corasick automaton and each $D_\ell$ is a dynamic dictionary}
 \FOR{$\ell = 0 \ldots \log_2 (n/\tau)$}
\STATE \COMMENT{Insert all distinct block-aligned substring fingerprints into $D_\ell$}
\FOR{$j=1 \ldots n/\tau-2^\ell$}

    \STATE $f_j \gets \phi(T[j\tau \ldots (j+2^\ell)\tau-1])$
    \STATE $L_j \gets \phi(T[j\tau \ldots (j+2^{\ell-1})\tau -1])$, $R_j \gets \phi(T[(j+2^{\ell-1})\tau \ldots (j+2^\ell)\tau-1])$
\IF{$\not \exists (f_k,L_k,R_k,k) \in D_\ell$ such that $f_j=f_k$}
\STATE Insert $(f_j,L_j,R_j,j)$ into $D_\ell$ indexed by $f_j$
\STATE {\bf if} {$\ell=0$} {\bf then} Insert $T[j\tau \ldots (j+1)\tau-1]$ into $AC$ dictionary
\ENDIF
\ENDFOR

\STATE \COMMENT{Check for collisions between any block-aligned and unaligned substrings}
\STATE {\bf if} {$\ell=0$} {\bf then} Feed $T[1 \ldots \tau-1]$ into $AC$
\FOR{$i=1 \ldots n-\tau\cdot 2^\ell+1$}
    \STATE $f'_i \gets \phi(T[i \ldots i+\tau\cdot{2^\ell}-1])$
    \STATE $L'_i \gets \phi(T[i \ldots i+\tau\cdot2^{\ell-1}-1])$, $R'_i \gets \phi(T[(i+2^{\ell-1})\tau \ldots i+\tau\cdot2^{\ell}-1])$
    \STATE {\bf if} {$\ell=0$} {\bf then} Feed $T[i+\tau-1]$ into $AC$ \COMMENT $AC$ now points at $T[i \ldots i+ \tau-1]$
    \IF{$\exists (f_k,L_k,R_k,k) \in D_\ell $ such that $f'_i=f_k$}
        \IF{$\ell>0$}

        \STATE {\bf if } {($L'_i \neq L_k$ or $R'_i \neq R_k$)} {\bf then} {\bf abort}
        \ELSE

        \STATE Compare $T[i \ldots i+\tau-1]$ to $T[k\tau \ldots (k+1)\tau-1]$ by inspecting $AC$
        \STATE {\bf if} {$T[i \ldots i+\tau-1] \neq T[k\tau \ldots (k+1)\tau-1]$} {\bf then} {\bf abort}
        \ENDIF

    \ENDIF
\ENDFOR
\ENDFOR
\end{algorithmic}
\end{algorithm}


 \paragraph{Correctness}We first consider all points at which Algorithm~\ref{alg:ver} may abort. First observe that if line 21 causes an abort then (\ref{prop:good}) is violated for $(i,k,0)$. Second, if line 18 causes an abort either $L'_i \neq L_k$ or $R'_i \neq R_k$. By the definition of $\phi$, in either case, this implies that $T[i \ldots i+\tau\cdot{2^\ell}-1] \neq T[k\tau \ldots k\tau+2^{\ell}\tau -1]$. By line 16, we have that $f'_i=f_k$ and therefore (\ref{prop:good}) is violated for $(i,k,\ell)$. Thus, Algorithm~\ref{alg:ver} does not abort if $\phi$ is $\tau$-good.

It remains to show that Algorithm~\ref{alg:ver} always aborts if $\phi$ is not $\tau$-good. Consider the total ordering on triples $(i,j,\ell)$ obtained by stably sorting (non-decreasing) by $\ell$ then $j$ then $i$. E.g. $(1,3,1) < (3,2,3) < (2,5,3) < (4,5,3)$. Let $(i^*,j^*,\ell^*)$ be the (unique) smallest triple under this order which violates (\ref{prop:good}). We first argue that  $(f_{j^*},L_{j^*},R_{j^*},{j^*})$ will be inserted into $D_{\ell^*}$ (and $AC$ if $\ell^*=0$). For a contradiction assume that when Algorithm~\ref{alg:ver} checks for $f_{j^*}$ in $D_{\ell^*}$ (line 7, with $j=j^*,\ell=\ell^*$) we find that some $f_k=f_{j^*}$ already exists in $D_{\ell^*}$, implying that $k<j^*$. If $T[j^*\tau\ldots j^*\tau+\tau 2^\ell -1] \neq T[k\tau\ldots k\tau+\tau2^\ell -1]$ then $(j^*\tau,k,\ell^*)$ violates (\ref{prop:good}). Otherwise, $(i^*,k,\ell^*)$ violates (\ref{prop:good}). In either case this contradicts the minimality of $(i^*,j^*,\ell^*)$ under the given order.

%

We now consider iteration $i=i^*$ of the second inner loop (when $\ell=\ell^*$). We have shown that $(f_{j^*},L_{j^*},R_{j^*},{j^*}) \in D_{\ell^*}$  and we have that $f'_{i^*}=f_{j^*}$ (so $k=j^*$) but the underlying strings are not equal. If $\ell=0$ then we also have that $T[j^*\tau \ldots (j^*+1)\tau-1]$ is in the $AC$ dictionary. Therefore inspecting the current $AC$ state, will cause an abort (lines 20-21). If $\ell>0$ then as $(i^*,j^*,\ell^*)$ is minimal, either $L'_{i^*} \neq L_{j^*}$ or $R'_{i^*} \neq R_{j^*}$ which again causes an abort (line 18),  concluding the correctness.

\paragraph{Time-Space Complexity} We begin by upper bounding the space used and the time taken to performs all dictionary operations on $D_\ell$ for any $\ell$. First observe that there are at most $O(n/\tau)$ insertions (line 8) and at most $O(n)$ look-up operations (lines 7,16). We choose the dictionary data structure employed based on the relationship between $n$ and $\tau$. If $\tau > \sqrt{n}$ then we use the deterministic dynamic dictionary of Ru\v{z}i\'{c}~\cite{Ruzic2008}. Using the correct choice of constants, this dictionary supports look-ups and insert operations in $O(1)$ and $O(\sqrt{n})$ time respectively (and linear space). As there are only $O(n/\tau) = O(\sqrt{n})$ inserts, the total time taken is $O(n)$ and the space used is $O(n/\tau)$. If $\tau \leq \sqrt{n}$ we use the Las-Vegas dynamic dictionary of Dietzfelbinger and  Meyer auf der Heide~\cite{Dietzfelbinger1990}. If $\Theta(\sqrt{n}) = O(n/\tau)$ space is used for $D_\ell$, as we perform $O(n)$ operations, every operation takes $O(1)$ time with high probability. In either case, over all $\ell$ we take $O(n \log n)$ total time processing dictionary operations.

The operations performed on $AC$ fall conceptually into three categories, each totalling $O(n \log n)$ time. First we insert $O(n/\tau)$ $\tau$-length substrings into the $AC$ dictionary (line 9). Second, we feed $T$ into the automaton (line 11,15) and third, we inspect the $AC$ state at most $n$ times (line 20). The space to store $AC$ is $O(n)$, the total length of the substrings.

 Finally we bound the time spent constructing fingerprints. We first consider computing $f'_i$ (line 13) for $i>1$. We can compute $f'_i$ in $O(1)$ time from $f'_{i-1}$,  $T[i-1]$ and $T[i+\tau\cdot 2^\ell]$. This follows immediately from the definition of $\phi$. We can compute $L'_i$ and $R'_i$ analogously. Over all $i,\ell$, this gives $O(n \log n)$ time. Similarly we can compute $f_j$ from $f_{j-1}$, $T[(j-1)\tau \ldots j\tau-1]$ and $T[(j-1+2^\ell)\tau \ldots (j+2^\ell)-1]$ in $O(\tau)$ time. Again this is analogous for $L'_i$ and $R'_i$. Summing over all $j,\ell$ this gives $O(n \log n)$ time again. Finally observe that the algorithm only needs to store the current and previous values for each fingerprint so this does not dominate the space usage.

\section{Longest Common Extensions on Two Strings}
We now show how to efficiently reduce LCE queries between two strings to LCE queries on a single string. We generalise our notation as follows. Let $P$ and $T$ be strings of lengths $n$ and $m$, respectively. Define $\LCE_{P,T}(i,j)$ to be the length of the longest common prefix of the substrings of $P$ and $T$ starting at $i$ and $j$, respectively. For a single string $P$, we define $\LCE_P(i,j)$ as usual. We can always trivially solve the LCE problem for $P$ and $T$ by solving it for the string obtained by concatenating $P$ and $T$. We show the following improved result.
\begin{theorem}\label{thm:twostringLCE}
Let $P$ and $T$ be strings of lengths $m$ and $n$, respectively. Given a solution to the LCE problem on $P$ using $s(m)$ space and $q(m)$ time and a parameter $\tau$, $1 \leq \tau \leq n$, we can solve the LCE problem on $P$ and $T$ using $O(\frac{n}{\tau}+s(m))$ space and $O(\tau+q(m))$ time.
\end{theorem}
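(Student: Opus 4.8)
The plan is to reduce $\LCE_{P,T}(i,j)$ queries to $\LCE_P$ queries on the single string $P$ (for which we are given a data structure) plus a modest amount of direct character comparison, using only $O(n/\tau)$ extra space to encode information about $T$. The key difficulty is that $T$ may be much longer than $P$, so we cannot afford to store anything proportional to $n=|T|$ beyond the $O(n/\tau)$ budget, nor can we simply concatenate and index $T$. The natural idea is to encode $T$ compactly in terms of $P$: for a sparse set of sampled positions in $T$ (one every $\tau$ characters, giving $O(n/\tau)$ of them), we precompute and store, for each sampled position $t$, how $T$ aligns against $P$ there — concretely the value $\LCE_{P,T}(p_t, t)$ together with the matching start position $p_t$ in $P$ that maximises it. This is the ``compact encoding of the larger string using the smaller string'' that the introduction advertises.

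First I would fix the sample $R=\{1,1+\tau,1+2\tau,\dots\}\subseteq\{1,\dots,n\}$ of size $O(n/\tau)$, and for each $t\in R$ store a pointer $p_t$ into $P$ such that a long prefix of $T[t\ldots n]$ coincides with $P[p_t\ldots m]$, along with the length of that agreement; the preprocessing scans $T$ once to compute these alignments, charging $O(\tau)$ amortised work per sampled position so the whole encoding costs $O(n/\tau)$ space. To answer $\LCE_{P,T}(i,j)$ (with $i$ an index into $P$ and $j$ into $T$), I would first advance $j$ to the next sampled position $t\in R$ by comparing at most $\tau$ characters of $P[i\ldots]$ against $T[j\ldots]$ directly; if a mismatch occurs within those $\tau$ characters we are done in $O(\tau)$ time. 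Otherwise, once aligned at $t$, I replace the comparison of $T[t\ldots n]$ against $P$ by a comparison \emph{within $P$}: using the stored $p_t$ we reduce the tail of the query to an $\LCE_P$ query between position $i'$ (the advanced index in $P$) and $p_t$, which the given single-string structure answers in $q(m)$ time — but capped by the stored agreement length, since $T$ and $P$ only coincide up to that point.

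The main obstacle I anticipate is the bookkeeping at the boundary of the stored alignment: the agreement between $T[t\ldots]$ and $P[p_t\ldots]$ has finite length, so an $\LCE_P$ query might report a match longer than the region where $T$ actually equals $P$, and in that case the true answer lies just past the encoded prefix and the query must resume scanning $T$ from a new sampled position. Handling this correctly — capping the $\LCE_P$ result by the stored length, and, when the encoded prefix is exhausted, jumping to the next sample in $R$ and repeating — is where the argument needs care, but each such jump either terminates at a mismatch within $O(\tau)$ comparisons or consumes a full $\tau$-block of agreement, so the amortised cost stays $O(\tau+q(m))$ per query. Adding the $O(n/\tau)$ encoding space and the $s(m)$ space of the supplied structure gives the claimed $O(n/\tau+s(m))$ space and $O(\tau+q(m))$ time, completing the reduction.
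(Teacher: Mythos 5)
Your data structure is essentially the paper's: for each position $t$ of $T$ divisible by $\tau$ you store the maximum value $a_t=\max_{j}\LCE_{P,T}(j,t)$ together with a maximising position $p_t$ in $P$ (the paper's arrays $A$ and $B$), and the first phase of your query (scan at most $\tau$ characters until a mismatch or a sampled position) is also identical. The genuine gap is in what you call the main obstacle. The scenario you try to handle --- ``the true answer lies just past the encoded prefix and the query must resume scanning $T$ from a new sampled position'' --- can never occur, and proving this is exactly the missing crux. Since $a_t$ is the \emph{maximum} agreement between $T[t\ldots n]$ and \emph{any} suffix of $P$, the suffix of $P$ at your advanced index $i'$ cannot agree with $T[t\ldots n]$ for more than $a_t$ characters. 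Concretely: if $\LCE_P(i',p_t)<a_t$, the mismatch inside $P$ transfers to a mismatch against $T$, so the agreement is exactly $\LCE_P(i',p_t)$; if $\LCE_P(i',p_t)\geq a_t$, the agreement is exactly $a_t$, since an agreement of length $a_t+1$ would contradict the maximality of $a_t$. Hence the answer is exactly $k+\min\left(a_t,\LCE_P(i',p_t)\right)$, obtained from a \emph{single} $\LCE_P$ query, and the procedure always terminates here. Your proof never establishes this, and it is the step on which both correctness and the time bound rest (the paper's Example~4 shows the cap is genuinely needed; the maximality of $a_t$ is what makes the capped value exact).

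Moreover, the fallback you propose instead would break the time bound if it were ever needed: each ``jump to the next sample and repeat'' performs another $\LCE_P$ query costing $q(m)$, and your amortisation (``each jump consumes a full $\tau$-block of agreement'') bounds only the number of character comparisons, not the number of $\LCE_P$ queries. Under your (unexcluded) scenario there could be up to $|\LCE_{P,T}(i,j)|/\tau$ jumps, giving $O\left(\tau+\frac{|\LCE_{P,T}(i,j)|}{\tau}\,q(m)\right)$ time rather than the claimed $O(\tau+q(m))$. A minor further point: your claim that the values $a_t,p_t$ can be computed with $O(\tau)$ amortised work per sample in one scan of $T$ is unjustified, though the theorem as stated makes no preprocessing claim, so this is not essential.
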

For instance, plugging in Theorem~\ref{thm:main}(i) in Theorem~\ref{thm:twostringLCE} we obtain a solution using $O(\frac{n}{\tau} + \frac{m}{\sqrt{\tau}})$ space and $O(\tau)$ time. Compared with the direct solution on the concatenated string that uses $O(\frac{n + m}{\sqrt{\tau}})$ we save substantial space when $m \ll n$.

\subsection{The Data Structure}
The basic idea for our data structure is inspired by a trick for reducing constant factors in the space for the LCE data structures~\cite[Ch.\ 9.1.2]{Gusfield1997}. We show how to extend it to obtain asymptotic improvements. Let $P$ and $T$ be strings of lengths $m$ and $n$, respectively. Our data structure for LCE queries on $P$ and $T$ consists of the following information.
\begin{itemize}
\item A data structure that supports LCE queries for $P$ using $s(m)$ space and $q(m)$ query time.
\item An array $A$ of length $\floor{\frac{n}{\tau}}$ such that $A[i]$ is the maximum $\LCE$ value between any suffix of $P$ and the suffix of $T$ starting at position $i\cdot \tau$, that is, $A[i] = \operatorname*{max}_{j=1 \ldots m} \LCE_{P,T}(j,i\tau)$.
\item An array $B$ of length $\floor{\frac{n}{\tau}}$ such that $B[i]$ is the index in $P$ of a suffix that maximises the $\LCE$ value, that is,
$B[i] =  \operatorname*{arg\,max}_{j=1 \ldots m} \LCE_{P,T}(j,i\tau)$.
\end{itemize}
Arrays $A$ and $B$ use $O(n/\tau)$ space and hence the total space is $O(n/\tau + s(m))$. We answer an $\LCE_{P,T}$ query as follows. Suppose that $\LCE_{P,T}(i,j) < \tau$. In that case we can determine the value of $\LCE_{P,T}(i,j)$ in $O(\tau)$ time by explicitly comparing the characters from position $i$ in $P$ and $j$ in $T$ until we encounter the mismatch. If $\LCE_{P,T}(i,j) \geq \tau$, we explicitly compare $k < \tau$ characters until $j+k \equiv 0 \pmod{\tau}$. When this occurs we can lookup a suffix of $P$, which the suffix $j+k$ of $T$ follows at least as long as it follows the suffix $i+k$ of $P$. This allows us to reduce the remaining part of the $\LCE_{P,T}$ query to an $\LCE_P$ query between these two suffixes of $P$ as follows.
\[
\LCE_{P,T}(i,j) = k + \min \left( A\left [\frac{j+k}{\tau} \right], \LCE_P \left (i+k, \, B \left [\frac{j+k}{\tau} \right] \right)\right)\; .
\]
We need to take the minimum of the two values, since, as shown by \autoref{ex:LCEPT}, it can happen that the $\LCE$ value between the two suffixes of $P$ is greater than that between suffix $i+k$ of $P$ and suffix $j+k$ of $T$. In total, we use $O(\tau + q(m))$ time to answer a query. This concludes the proof of Theorem~\ref{thm:twostringLCE}.
\begin{example}\label{ex:LCEPT}
Consider the query $\LCE_{P,T}(2,13)$ on the string $P$ from \autoref{ex:suffixsample} and
\begin{center}
\includegraphics{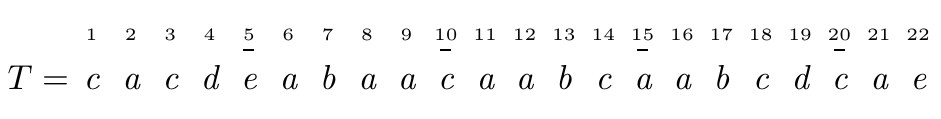}

\end{center}
The underlined positions in $T$ indicate the positions divisible by $5$. As shown below, we can use the array $A=[0,6,4,2]$ and $B=[16,3,11,10]$.
\begin{center}
\includegraphics{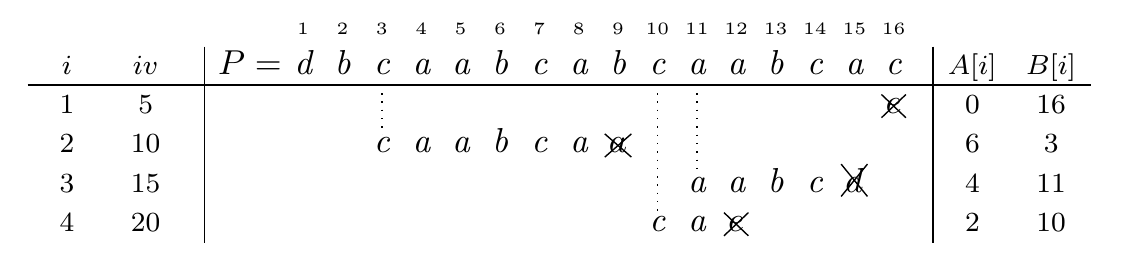}
\end{center}
To answer the query $\LCE_{P,T}(2,13)$ we make $k=2$ character comparisons and find that
\begin{align*}
\LCE_{P,T}(2,13) &= 2+\min \left( A\left [ \frac{13+2}{5} \right], \LCE_P \left (2+2, B\left [ \frac{13+2}{5} \right ] \right ) \right) \\&= 2 + \min(4,5) = 6 \; .
\end{align*}
\end{example}

\section{Lower Bound}

In this section we prove the lower bound for \LCE data structures.

\begin{lemma}\label{lem:lowerbound}
In the non-uniform cell probe model any \LCE data structure that uses $O(n/\tau)$ bits additional space for an input array of size $n$, requires $\Omega(\tau)$ query time, for any $\tau$, where $1 \leq \tau \leq n$.
\end{lemma}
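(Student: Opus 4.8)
The plan is to prove the lower bound by reduction from the \emph{range minimum query} (RMQ) problem, exploiting the well-known fact that RMQ has an $\Omega(\tau)$ time lower bound when restricted to $O(n/\tau)$ bits of additional space in the non-uniform cell probe model. The connection is that longest common extensions can \emph{encode} comparisons between array entries: if I can build a string $T$ from an input array $X[1\ldots n]$ such that the relative order of two entries $X[a]$ and $X[b]$ is recoverable from a small number of LCE queries on $T$, then a fast LCE data structure would yield a fast comparison oracle, and from comparisons one can extract minima.

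First I would recall (or cite) the RMQ lower bound in the non-uniform cell probe model: any data structure answering RMQ on an array of $n$ elements using $O(n/\tau)$ bits of space beyond the array needs $\Omega(\tau)$ probes per query. This is the hammer; the work is entirely in the reduction. Second, I would describe an encoding that turns an array into a string so that an LCE query reveals an order relation. A natural construction is a \emph{unary-style gadget}: map each array value $X[i]$ to a block of the form (a run of one symbol of length determined by $X[i]$) followed by a separator, so that comparing two suffixes via an LCE query measures how long two such runs agree, which pins down which value is larger (or detects equality). The key design goal is that the string $T$ has length $O(n)$ (so the parameter $\tau$ scales correctly) and that a single $\LCE$ query on $T$ suffices to compare two designated array positions. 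Then an RMQ, which must identify the position of the minimum in a queried range, can be answered by an interleaved search that performs $O(1)$ (or at most $O(\log n)$, which I must be careful to keep from spoiling the bound) LCE queries per comparison step.

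The main obstacle will be making the reduction \emph{tight}: I need each RMQ to cost only a constant number of LCE queries so that the $\Omega(\tau)$ RMQ lower bound transfers cleanly to an $\Omega(\tau)$ LCE lower bound, and I need the space overhead to be preserved, i.e. $O(n/\tau)$ bits for the LCE structure must translate into $O(n/\tau)$ bits for the induced RMQ structure over an array of comparable size. If a single LCE query only compares two \emph{fixed} positions rather than answering a full range minimum, I would instead reduce from a decision problem that RMQ (or a closely related predecessor/comparison problem) already lower-bounds with the same space regime, so that one LCE query equals one hard query. I would therefore phrase the reduction at the granularity where one LCE query simulates exactly one query of the hard problem, avoiding any $\log$ factor blow-up.

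Concretely, the steps in order are: (1) state the non-uniform cell probe RMQ (or comparison) lower bound with $O(n/\tau)$ bits forcing $\Omega(\tau)$ time; (2) give the string encoding $T$ of length $\Theta(n)$ from the array, with the run-length gadget and separators chosen so that $\LCE$ values decode the comparison outcome; (3) verify the encoding is computable and that the LCE-to-comparison simulation uses $O(1)$ LCE queries and preserves the $O(n/\tau)$-bit space budget; (4) conclude that a sub-$\Omega(\tau)$ LCE data structure in this space regime would contradict the RMQ lower bound, giving the claimed $\Omega(\tau)$ bound for all $\tau$ with $1 \leq \tau \leq n$. The delicate point to get right is step (2)–(3): ensuring the gadget separators prevent spurious matches across block boundaries so that each LCE query reports exactly the agreement length of the two intended value-encodings and nothing more.
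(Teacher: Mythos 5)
You have the right hammer and the right direction of reduction: cite the Brodal et al.\ RMQ lower bound in the non-uniform cell probe model, and show that a fast LCE structure in the $O(n/\tau)$-bit regime would answer RMQ too fast. But the core of your reduction --- the step where an RMQ is actually answered by LCE queries --- has a genuine gap. Your gadget turns one LCE query into one \emph{pairwise comparison} of two designated array entries; however, RMQ$(l,r)$ asks for the \emph{position of the minimum over a whole range}, and no constant (indeed, no polylogarithmic) number of pairwise value comparisons can determine this: comparison-based range-minimum search over an arbitrary range needs $\Omega(r-l)$ comparisons in the worst case. You notice this obstacle yourself, but your fallback --- ``reduce instead from a decision problem that RMQ already lower-bounds'' --- is left uninstantiated; no such intermediate problem is named, and identifying one is precisely the missing content. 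As it stands, step (3) of your plan (one RMQ costs $O(1)$ LCE queries) cannot be realized by the comparison-gadget design, so the $\Omega(\tau)$ bound does not transfer.

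The paper closes exactly this gap with a different and much simpler idea. The RMQ lower bound of Brodal et al.\ holds \emph{even for binary arrays}, and for a binary array $A$ one can take the string to be $A$ itself --- no unary gadgets, no separators, no length blow-up, and no issue of simulating probes to an encoded string by probes to the array. Beyond the LCE structure one stores only $O(1)$ extra information: the endpoints $i,j$ of the longest run of 1's in $A$. Then a \emph{single} query $\res = \LCE(l,i)$ answers RMQ$(l,r)$: since $A[i \ldots j]$ is an all-1's run of maximal length $z=j-i+1$, the quantity $\min(\res, z)$ equals the number of leading 1's of the suffix starting at $l$, so $l+\min(\res,z)$ is the first 0 at or after position $l$; if that position is at most $r$ it is a range minimum, and otherwise $A[l \ldots r]$ contains only 1's and any position in the range may be returned. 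The key insight your proposal lacks is that for binary inputs an RMQ reduces to ``find the first 0 at or after $l$,'' and that this can be extracted from one LCE query against a stored reference run --- rather than from comparisons between pairs of entries.
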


\paragraph{Range Minimum Queries (RMQ)} Given an array $A$ of integers, a range minimum query data structure must support queries of the form:
\begin{itemize}
\item  RMQ($l$, $r$): Return the position of the minimum element in $A[l,r]$. 
\end{itemize} 
Brodal et al.~\cite{BDR12} proved any algorithm that uses $n/\tau$ bits additional space to solve the RMQ problem for an input array of size $n$ (in any dimension), requires $\Omega(\tau)$ query time, for any $\tau$, where $1 \leq \tau \leq n$, 
even in a binary array $A$ consisting only of 0s and 1s. Their proof is in the non-uniform cell probe model~\cite{PB99}. In this model, computation is free, and time is counted as the number of cells accessed (probed) by the query algorithm. The algorithm is allowed to be non-uniform, i.e. for different values of input parameter $n$, we can have different algorithms.

To prove \autoref{lem:lowerbound}, we will show that any \LCE data structure can be used to support range minimum queries, using one \LCE query and $O(1)$ space additional to the space of the \LCE data structure. 
 
\paragraph{Reduction} Using any data structure supporting \LCE queries we can support RMQ queries on a binary array A as follows:
In addition to the \LCE data structure, store the indices $i$ and $j$ of the longest substring $A[i,j]$ of $A$ consisting of only 1's.
To answer a query RMQ(l,r) compute $\res = \LCE(l,i)$. Let $z = j-i+1$ denote the length of the longest substring of $1$'s. Compare $\res$ and $z$:
\begin{itemize}
\item If $\res \leq z$ and $l+ \res \leq r$, return $l + \res$. 
\item If $\res > z$ and $l+z \leq r$,  return $l + z$. 
\item Otherwise return any position in $[l,r]$.
\end{itemize}
To see that this correctly answers the RMQ query consider the two cases. 
If  $\res \leq z$ then $A[l, l + \res-1]$ contains only 1's, since $A[i, j]$ contains only 1's. Thus  position $l + \res$ is the index of the first 0 in $A[l,l + \res]$.  It follows that if $l+\res \leq r$, then $A[l+\res]=0$ and is a minimum in $A[l,r]$. Otherwise, $A[l,r]$ contains only 1's.

If $\res > z$ then $A[l, l+z-1]$ contains only 1's and position $A[l+z]=0$ . There are two cases: Either  $l+z \leq r$, in which case this position contains the first 0 in $A[l,r]$. Or  $l+z > r$ in which case $A[l,r]$ contains only 1's.

\section{Conclusions and Open Problems}

We have presented new deterministic and randomised time-space trade-offs for the Longest Common Extension problem. In particular, we have shown that there is a data structure for LCE queries using $O(n/\tau)$ space and supporting LCE queries in $O(\tau \log (|\LCE(i,j)|/\tau))$ time. We have also shown that any LCE data structure using $O(n/\tau)$ bits of space must have query time $\Omega(\tau)$. Consequently, the time-space product of our trade-off is essentially a factor $\log^2 n$ from optimal. It is an interesting open problem whether this gap can be closed.

Another open question, which is also of general interest in applications of error-free fingerprinting, is whether it is possible to find a $\tau$-good fingerprinting function on a string of length $n$ in $O(n\log n)$ time with high probability and $O(n/ \tau)$ space simultaneously. Moreover, a deterministic way of doing this would provide a strong tool for derandomising solutions using fingerprints, including the results in this paper.

\paragraph{Acknowledgements}
We thank the anonymous referes for their valuable comments, and in particular for pointing out an issue with the preprocessing space of the randomised solution.


\bibliographystyle{abbrv}
\bibliography{paper}

\end{document}